\documentclass[11pt]{article}

\usepackage{authblk}   

\usepackage[a4paper,margin=1in]{geometry}
\usepackage{amsmath,amssymb,amsthm}
\usepackage{graphicx}
\usepackage{subcaption}
\usepackage{booktabs}
\usepackage{hyperref}
\usepackage{physics}
\usepackage{xcolor}
\usepackage{microtype}

\newtheorem{proposition}{Proposition}
\newcommand{\figdir}{figures}
\usepackage{tikz}
\usetikzlibrary{arrows.meta,positioning,calc,decorations.pathmorphing}
\usetikzlibrary{shapes}
\newtheorem{definition}{Definition}

\title{Adversarial Stress Tests for Quantum Certification}

\author[1]{Veronica Sanz}
\author[2]{Augusto Smerzi}

\affil[1]{Instituto de Física Corpuscular (IFIC), Universidad de Valencia--CSIC, E-46980 Valencia, Spain}
\affil[2]{College of Engineering Physics, Shenzhen Technology University, Shenzhen 518118, China}

\begin{document}
\maketitle

\begin{abstract}
We develop a practical framework for semi-device-independent (SDI)
certification under operational deviations from the ideal protocol model.
Apparent violations of classical benchmarks need not signal genuinely
non-classical behaviour; they can arise from misalignment between
(i) the scoring rule,
(ii) the finite-sample statistical bound applied to that score, and
(iii) the operational model realised in the experiment, including
bias, memory, drift, and selection effects.

We formalise a protocol-agnostic alignment principle based on a
martingale-safe lower confidence bound and an operationally consistent
effective classical ceiling. This yields a quantitative diagnostic,
the \emph{robustness gap}
$\Delta_{\mathrm{rob}} = S_{\mathrm{low}} - S_{C,\mathrm{eff}}$,
which separates statistical fluctuations from structural modelling
errors. Statistical deviations vanish asymptotically, whereas model
misalignment can produce persistent false certification unless the
benchmark is corrected.

Using the $2\!\to\!1$ random access code as a minimal SDI testbed,
we show that postselection can inflate conditional scores, whereas
unconditional scoring restores the correct operational meaning of the
witness. We further show that adaptive learning-based classical agents
do not enlarge the admissible classical set; rather, they recover the
effective classical ceiling implied by the operational model.

The resulting framework provides a systematic diagnostic for
certification in realistic quantum communication and measurement
settings with embedded classical control, adaptive processing, and
nonideal data acquisition.\end{abstract}


\section{Introduction}
\label{sec:intro}

Certification protocols are central to quantum information processing
and quantum-enhanced technologies. In semi-device-independent (SDI)
scenarios, operational constraints --- such as bounded dimension or
restricted communication --- enable performance guarantees without
full device characterisation. A core task is to determine, from finite
experimental data, whether an observed score certifies performance
beyond what is achievable by classical strategies consistent with
those assumptions.

Such certification statements rely implicitly on an operational
data model. Standard analyses typically assume independent and
identically distributed (IID) inputs, absence of selection effects,
and memoryless device behaviour. In realistic implementations,
however, these idealisations are rarely exact. Input settings may be
biased, temporal correlations may arise, and data may undergo
postselection or adaptive preprocessing. These deviations do not
necessarily signal a breakdown of the underlying physical principle;
rather, they modify the effective classical optimisation problem
against which performance must be evaluated.

This leads to a structural question: how can one distinguish genuine
supra-classical behaviour from artefacts induced by operational model
mismatch? In particular, how should scoring rules, statistical
confidence bounds, and classical reference values be defined so that
certification claims remain valid under realistic deviations?
Importantly, statistical fluctuations vanish with increasing sample
size, whereas model misalignment does not; without explicit alignment,
false certification can persist even asymptotically.

To our knowledge, this alignment problem has not been formulated
systematically in SDI certification. Existing analyses typically focus
on deriving classical bounds under idealised assumptions, or on
establishing finite-sample concentration guarantees once the
operational model is fixed. Operational deviations such as biased
sampling, selection effects, or adaptive classical control are often
treated as implementation imperfections, but not as part of the
benchmark-definition problem itself. This work isolates that missing
consistency layer and treats certification validity as an operational
alignment condition linking score, statistical bound, and classical
reference.

In this work we develop a minimal framework to address this issue.
Our starting point is an alignment principle: certification validity
requires consistency between (i) the score used to summarise
experimental data, (ii) the finite-sample statistical bound used to
certify that score, and (iii) the classical benchmark appropriate to
the operational data-generating process. Misalignment between these
elements can produce artificial ``violations'' of classical limits,
even when the underlying physical behaviour remains entirely classical.

We formalise a protocol-agnostic diagnostic quantity, the
\emph{robustness gap},
\begin{equation}
    \Delta_{\mathrm{rob}} = S_{\mathrm{low}} - S_{C,\mathrm{eff}},
\end{equation}
where $S_{\mathrm{low}}$ is a martingale-safe lower confidence bound
on the operational score and $S_{C,\mathrm{eff}}$ is the effective
classical ceiling defined under the same operational model.
A certification claim is supported only when
$\Delta_{\mathrm{rob}} > 0$ at a prescribed confidence level.
This construction cleanly separates statistical effects from
structural modelling assumptions.

To illustrate the framework in a controlled setting, we use the
minimal $2\!\to\!1$ random access code (RAC) as a testbed.
We analyse the impact of input bias, temporal correlations,
selection effects, and adaptive classical strategies.
We show that conditioning on ``kept'' events can inflate empirical
scores and generate apparent supra-classical performance when
compared to an inappropriate benchmark. By contrast, an unconditional
scoring rule combined with martingale-safe bounds restores robust
behaviour. Adaptive learning strategies are used only as constructive
stress tests: they recover at most the effective classical ceiling
implied by the operational model, rather than creating a new loophole.

Although we focus on a simple SDI scenario for transparency,
the framework is not protocol-specific. The alignment principle and
robustness gap apply broadly to certification tasks in which the
operational data model may deviate from the idealized one. This
perspective is particularly relevant for realistic quantum
communication and measurement settings with adaptive classical
control, efficiency constraints, and environmental drift.

\section{Related work}

Semi-device-independent (SDI) certification has been widely studied,
particularly in the context of dimension witnesses and prepare-and-measure
protocols.
Device-independent tests of classical versus quantum dimensionality
were introduced in~\cite{Gallego2010DIM}, and the SDI framework was
further developed for cryptographic tasks, including prepare-and-measure
quantum key distribution, in~\cite{Pawlowski2011SDI}.
These works establish classical performance bounds under specified
operational assumptions and connect random access codes to security
guarantees.

Finite-size effects and martingale-based concentration techniques
have been developed extensively in both SDI and fully device-independent
scenarios, providing statistically sound certification tools
under well-defined models.
Such analyses focus on deriving confidence intervals and asymptotic
security guarantees once the operational assumptions are fixed.

Detection inefficiencies and postselection effects have also been
investigated.
In device-independent Bell tests, detection loopholes and fair-sampling
assumptions are well understood~\cite{pearle1970hidden,larsson1998bell,brunner2014bell}.

Beyond the standard detection loophole, Bell tests have also provided
clear precedents for the role of postselection, coincidence windows,
and temporal dependence in certification claims.
The coincidence-time loophole shows that setting-dependent pairing or
selection of events can produce apparent Bell violations unless the
inequality is reformulated consistently with the actual event-selection
rule~\cite{LarssonGill2004,Larsson2014Loopholes,Gebhart2022}.
Likewise, the memory loophole and its statistical treatment clarified
that history-dependent local models do not invalidate Bell tests per se,
but they do require finite-sample analyses that remain valid beyond the
IID setting~\cite{BarrettMemory2002,ElkoussWehner2016,Zhang2013}.
More recently, the effect of postselection on device-independent claims
under fair-sampling assumptions has been analysed explicitly in the Bell
framework~\cite{Orsucci2020,Gebhart2023}.
In SDI settings, the impact of detection inefficiencies and selection
effects has been analysed explicitly, including detection-loophole
attacks on prepare-and-measure protocols~\cite{DallArno2012SDI}.
Related work has studied robustness of dimension witnesses to
experimental imperfections~\cite{DallArno2012DIM} and alternative
binary-outcome constructions~\cite{Czechlewski2018DIW}.

Recent work has also revisited detection efficiency thresholds in semi-device-independent and prepare-and-measure scenarios from a causal and operational perspective, including the role of nondetection modelling in dimension witnessing and related SDI tasks~\cite{Sarubi:2025ekp}. Such analyses study how imperfections modify the admissible correlation set or the witness threshold itself. Our focus is complementary: we do not analyse detector inefficiency as a physical loophole per se, but rather the consistency condition that the score, finite-sample bound, and classical benchmark must be defined under the same operational data model.

In most prior treatments, both in Bell and in SDI settings,
operational deviations are addressed either at the level of the
physical model---by deriving modified local or classical bounds under
altered assumptions---or at the level of statistical analysis---by
establishing finite-sample guarantees once the operational model is
fixed.

In particular, the possibility that misalignment between these
elements can generate asymptotically persistent false certification
has not been isolated as a structural modelling vulnerability.
The present work complements existing theory by introducing a
protocol-agnostic robustness diagnostic that integrates scoring,
concentration, and benchmark definition within a single operational
alignment framework. 

Unlike standard loophole analyses in Bell tests, or modified-threshold
analyses in SDI settings, which primarily ask how operational
imperfections alter the admissible correlation set or the relevant
witness bound, the present work isolates a modelling-layer effect
arising from the interaction between scoring, statistical
concentration, and benchmark definition.

\section{Certification under operational assumptions}
\label{sec:cert_model}

\subsection{Score and witness estimation}

Consider a certification task defined by a performance witness $S$ computed
from experimental data. We assume that in each round $t=1,\dots,N$, an
experiment produces a bounded outcome variable $X_t \in [0,1]$, where
$X_t=1$ corresponds to a successful event and $X_t=0$ to failure.
The empirical score is defined as
\begin{equation}
    \hat S = \frac{1}{N_{\mathrm{eval}}}
    \sum_{t \in \mathcal{E}} X_t ,
\end{equation}
where $\mathcal{E}$ denotes the set of evaluation rounds and
$N_{\mathrm{eval}} = |\mathcal{E}|$.

In idealized analyses, $\mathcal{E}$ typically coincides with all
experimental rounds. However, in practical settings $\mathcal{E}$ may
depend on selection rules (e.g.\ discarding certain rounds). The
definition of $\mathcal{E}$ is therefore part of the certification
protocol and must be specified explicitly.

\subsection{Finite-sample confidence bounds}

Certification statements are made at finite data and must therefore
incorporate statistical uncertainty. We define a lower confidence bound
$S_{\mathrm{low}}$ satisfying
\begin{equation}
    \Pr\!\left( S_{\mathrm{true}} \ge S_{\mathrm{low}} \right)
    \ge 1-\alpha ,
\end{equation}
for a chosen confidence level $1-\alpha$.

In the bounded-increment setting $X_t \in [0,1]$, martingale
concentration inequalities provide finite-sample guarantees without
requiring independence assumptions. A representative example is the
Azuma--Hoeffding bound,
\begin{equation}
    S_{\mathrm{low}}
    = \hat S -
    \sqrt{\frac{\ln(1/\alpha)}{2 N_{\mathrm{eval}}}} ,
\end{equation}
which is valid under bounded conditional expectations.
The use of martingale-safe bounds is particularly appropriate when
temporal correlations or adaptive strategies may be present.

Other martingale inequalities, such as variance-adaptive Bernstein or Freedman-type bounds, may provide tighter finite-sample constants when conditional variance information is available; the alignment principle developed here is agnostic to the specific concentration tool, provided the bound remains valid under the operational filtration.

\subsection{Classical reference values}

Let $S_C$ denote the classical ceiling under the ideal operational
model assumed by the protocol. In practice, deviations in the
data-generating process may alter the set of admissible classical
strategies and hence modify the appropriate reference value.
We denote by $S_{C,\mathrm{eff}}$ the \emph{effective classical ceiling}
corresponding to the actual operational model.

Importantly, $S_{C,\mathrm{eff}}$ must be derived under the same
assumptions that govern the observed data (e.g.\ biased inputs,
memory effects, or selection rules). Comparing $\hat S$ to an
idealized $S_C$ when the operational model differs can lead to
misleading conclusions.

\subsection{Acceptance criterion}

A certification protocol accepts a supra-classical claim only if
\begin{equation}
    S_{\mathrm{low}} > S_{C,\mathrm{ref}},
\end{equation}
where $S_{C,\mathrm{ref}}$ is the classical benchmark used for
comparison. Robust certification requires that
$S_{C,\mathrm{ref}} = S_{C,\mathrm{eff}}$, i.e.\ that the benchmark
matches the operational data model.

The robustness gap introduced in Sec.~\ref{sec:intro},
\begin{equation}
    \Delta_{\mathrm{rob}}
    = S_{\mathrm{low}} - S_{C,\mathrm{eff}},
\end{equation}
provides a quantitative measure of this condition.

\begin{proposition}[Effective classical ceiling as a finite optimisation]
\label{prop:effective_ceiling}
Consider a finite prepare-and-measure task with preparation input
$a\in\mathcal A$, measurement setting $y\in\mathcal Y$, output
$b\in\mathcal B$, and linear score
\begin{equation}
S=\sum_{a,y,b} c_{ayb}\, p(b|a,y).
\label{eq:linear_score_prop}
\end{equation}
Assume that the preparation device is restricted to transmit a
classical message $m\in\mathcal M$ with $|\mathcal M|\le d$, and that
the operational input law $\pi(a,y)$ is fixed. Then the set of
classically achievable behaviours $p(b|a,y)$ is a convex polytope whose
extreme points are the deterministic encoder--decoder strategies
\[
f:\mathcal A\to\mathcal M,
\qquad
g:\mathcal M\times\mathcal Y\to\mathcal B.
\]
Consequently, the effective classical ceiling is
\begin{equation}
S_{C,\mathrm{eff}}
=
\max_{\lambda\in\Lambda_{\mathrm{det}}} S_\lambda,
\label{eq:SCeff_vertex}
\end{equation}
where $\Lambda_{\mathrm{det}}$ denotes the finite set of deterministic
encoder--decoder pairs and $S_\lambda$ is the expected score of the
strategy $\lambda$ under the operational law $\pi(a,y)$.
Equivalently, $S_{C,\mathrm{eff}}$ can be obtained as a linear program
over convex weights on deterministic strategies.
\end{proposition}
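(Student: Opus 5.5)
I plan to first make the notion of a ``classically achievable behaviour'' precise. With only a $d$-valued classical message and shared randomness $\lambda$ (possibly trivial), the most general classical model is
\[
p(b|a,y)=\sum_{\lambda} q(\lambda)\sum_{m\in\mathcal M} p_E(m|a,\lambda)\,p_D(b|m,y,\lambda).
\]
I will take this as the definition of the classical set $\mathcal C_d$. Any private randomness in the encoder and decoder can be absorbed into $\lambda$. The key observation is that the operational input law $\pi(a,y)$ does not enter this set at all. It only enters the score, through the weights $c_{ayb}$, which in the operational setting take the form $c_{ayb}=\pi(a,y)\,w(a,y,b)$ or some analogous expression. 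This is why fixing $\pi$ leaves $\mathcal C_d$ unchanged, and only the linear functional depends on the operational model.

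\textbf{Step 1 (extreme points).} For fixed $\lambda$, I will decompose the stochastic encoder $p_E(\cdot|a,\lambda)$ as a convex combination of deterministic functions $f:\mathcal A\to\mathcal M$. This is the standard fact that the product of simplices $\prod_a\Delta(\mathcal M)$ has the deterministic maps as its vertices. I will do the same for the decoder with $g:\mathcal M\times\mathcal Y\to\mathcal B$. The expression is multilinear in $(p_E,p_D)$, so expanding the product of the two convex combinations writes each behaviour as a convex combination of the deterministic behaviours
\[
p_{f,g}(b|a,y)=\delta_{b,g(f(a),y)}.
\]
These are finitely many, since $|\Lambda_{\det}|=|\mathcal M|^{|\mathcal A|}\,|\mathcal B|^{|\mathcal M||\mathcal Y|}$. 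Conversely, every convex mixture of them is achieved by letting the shared randomness select $(f,g)$. Hence $\mathcal C_d=\mathrm{conv}\{p_{f,g}\}$, which is a polytope because it is the convex hull of finitely many points.

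\textbf{Step 2 (extreme points are deterministic).} Every vertex of a convex hull of finitely many points lies among those points, so each extreme point is some $p_{f,g}$. I will not claim that every $p_{f,g}$ is extreme, since some may coincide or be redundant. This suffices for the statement, which I read as saying the extreme points are drawn from the deterministic strategies.

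\textbf{Step 3 (ceiling and LP).} The score $S$ is a linear functional of $p$. A linear function on a polytope attains its maximum at a vertex, and for $p=\sum_\lambda q_\lambda p_\lambda$ we have $S(p)=\sum_\lambda q_\lambda S_\lambda\le\max_\lambda S_\lambda$. This gives \eqref{eq:SCeff_vertex}, with $S_\lambda=\sum_{a,y}c_{ay\,g(f(a),y)}$. The LP reformulation then reads
\[
\max_{q\ge 0}\ \sum_{\lambda}q_\lambda S_\lambda\quad\text{subject to}\quad \sum_\lambda q_\lambda=1,
\]
and its optimum equals the vertex maximum.

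I expect the main obstacle to be conceptual rather than technical. The modelling point that must be argued carefully is that ``classical'' means at most $d$ messages plus arbitrary shared randomness, and that $\pi$ affects only the coefficients and not the feasible set. Without shared randomness the set is not convex, and the proposition would then describe its convex hull. The multilinear expansion in Step 1 is the only computation, and it is routine.
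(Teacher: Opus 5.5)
Your proposal is correct and takes the same approach as the paper. Both write every classical behaviour as a convex mixture of the finitely many deterministic behaviours $\delta_{b,g(f(a),y)}$, conclude that the classical set is a polytope, and use linearity of $S$ in the weights $q_\lambda$ to place the maximum at a deterministic strategy, which is also the optimum of the LP. You are somewhat more careful than the paper in three places: you derive the mixture form from stochastic encoders and decoders rather than asserting it, you note that the extreme points lie \emph{among} the deterministic behaviours rather than coinciding with all of them, and you make explicit that $\pi(a,y)$ enters only through the score coefficients (the paper instead multiplies $c_{ayb}$ by $\pi(a,y)$ in $S_\lambda$).
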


\begin{proof}
For fixed finite sets $\mathcal A,\mathcal Y,\mathcal B,\mathcal M$,
there are only finitely many deterministic encoders
$f:\mathcal A\to\mathcal M$ and deterministic decoders
$g:\mathcal M\times\mathcal Y\to\mathcal B$.
Each pair $\lambda=(f,g)$ defines a deterministic conditional behaviour
\[
p_\lambda(b|a,y)=\delta_{b,g(f(a),y)}.
\]
Allowing shared randomness means that any classical strategy is a convex
mixture of such deterministic behaviours:
\begin{equation}
p(b|a,y)=\sum_{\lambda\in\Lambda_{\mathrm{det}}}
q_\lambda\, p_\lambda(b|a,y),
\qquad
q_\lambda\ge 0,
\qquad
\sum_{\lambda} q_\lambda=1.
\label{eq:convex_mixture_det}
\end{equation}
Hence the classical behaviour set is the convex hull of finitely many
points, and therefore a convex polytope.

Since the score in Eq.~\eqref{eq:linear_score_prop} is linear in
$p(b|a,y)$, it is also linear in the mixture weights $q_\lambda$:
\begin{equation}
S=\sum_{\lambda\in\Lambda_{\mathrm{det}}} q_\lambda S_\lambda,
\qquad
S_\lambda :=
\sum_{a,y,b} c_{ayb}\,\pi(a,y)\, p_\lambda(b|a,y).
\end{equation}
A linear functional over a convex polytope attains its maximum at an
extreme point. Therefore the maximum classical score is attained by at
least one deterministic strategy, yielding
Eq.~\eqref{eq:SCeff_vertex}. The equivalent linear-program formulation
follows directly from Eq.~\eqref{eq:convex_mixture_det}.
\end{proof}

This result shows that the operational classical benchmark is not an
ad hoc correction, but the solution of a well-defined optimisation
problem determined by the same assumptions that govern the data.

\paragraph{Example: biased $2\to 1$ RAC.}
In the $2\to 1$ random access code with $y\in\{0,1\}$ drawn from a biased distribution $\Pr(y=0)=q$, the classical message alphabet has size $|\mathcal M|=2$. Enumerating deterministic encoders and decoders shows that the optimal strategy guesses perfectly the more probable query and randomly the less probable one, yielding
\begin{equation}
S_{C,\mathrm{eff}}(q)=\frac{1+\max(q,1-q)}{2}
=\frac{3}{4}+\frac{|2q-1|}{4}.
\end{equation}
Writing $\varepsilon=2q-1$, this becomes
\begin{equation}
S_{C,\mathrm{eff}}(\varepsilon)=\frac{3}{4}+\frac{|\varepsilon|}{2},
\end{equation}
in agreement with the analytic ceiling used in Sec.~6.4.

\section{Operational model deviations}
\label{sec:deviations}

The certification framework described in Sec.~\ref{sec:cert_model}
assumes an operational model under which the classical reference
$S_C$ is defined. In realistic implementations, deviations from
this model may arise. These deviations do not necessarily imply
a breakdown of the underlying physical mechanism; rather, they
modify the effective classical performance that must be used
as reference.

We frame the certification problem in the general prepare-and-measure (PAM) semi-device-independent (SDI) setting. In each trial, a referee samples a preparation input $a\in\mathcal A$ and a measurement setting $y\in\mathcal Y$, and sends $a$ to the preparation device and $y$ to the measurement device. The preparation device emits a physical system or classical message $m$, constrained by a bounded resource (for instance, a classical alphabet $|\mathcal M|\le d$ or a quantum system of Hilbert-space dimension at most $d$). The measurement device produces an output $b\in\mathcal B$. The experiment is summarized by conditional probabilities $p(b|a,y)$ and by a linear task score
\begin{equation}
S=\sum_{a,y,b} c_{ayb}\, p(b|a,y).
\end{equation}
Realistic implementations deviate from the ideal PAM model in several ways. In this work we focus on three operational deviations: biased setting generation, temporal correlations or adaptivity due to device memory, and selection or postselection rules that keep or discard trials. Our central requirement is that the score definition, the finite-sample statistical guarantee applied to that score, and the classical benchmark used for comparison must all be derived under the same operational data-generating model. Figure~\ref{fig:pam_dag} summarizes the PAM structure together with the operational deviations considered in this work.

\begin{figure}[ht!]
\centering
\begin{tikzpicture}[
    >=Latex,
    thick,
    every node/.style={font=\normalsize},
    obs/.style={circle, draw, minimum size=12mm, fill=green!65, inner sep=0pt},
    op/.style={circle, draw, minimum size=12mm, fill=violet!70, inner sep=0pt},
    hid/.style={rounded rectangle, draw, minimum width=11mm, minimum height=10mm, fill=yellow!80!orange},
    dsh/.style={->, dashed},
    sol/.style={->}
]

\node[obs] (x) at (0,1.8) {$a$};
\node[op]  (st) at (3.2,1.8) {$S_t$};
\node[obs] (y) at (6.4,1.8) {$y$};
\node[obs] (r) at (8.5,1.8) {$R$};

\node[op]  (m) at (1.1,0) {$m$};
\node[op]  (b) at (5.5,0) {$b$};

\node[hid] (lam) at (3.3,-1.7) {$\lambda$};
\node[op]  (kt) at (6.7,-1.8) {$K_t$};

\draw[sol] (x) -- (m);
\draw[sol] (m) -- (b);
\draw[sol] (y) -- (b);
\draw[sol] (lam) -- (m);
\draw[sol] (lam) -- (b);

\draw[dsh] (st) -- (m);
\draw[dsh] (st) -- (b);
\draw[dsh] (r) -- (y);
\draw[dsh] (kt) -- (b);
\draw[dsh] (kt) -- (m);

\end{tikzpicture}
\caption{\textbf{Prepare-and-measure causal diagram with operational deviations.}
Green nodes denote observed inputs, violet nodes operational or internal variables, and the yellow node a latent shared variable. Solid arrows indicate the ideal PAM structure, while dashed arrows represent operational deviations such as memory/adaptivity, imperfect setting generation, and selection/postselection.}
\label{fig:pam_dag}
\end{figure}
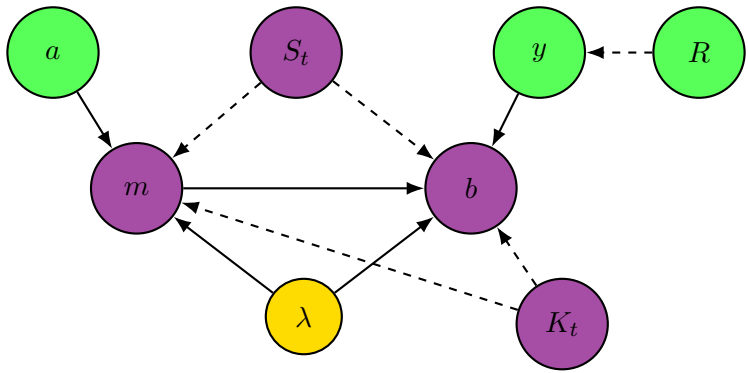
\subsection{Input bias}

Many certification protocols assume uniformly distributed input
settings. In practice, input distributions may be biased,
either unintentionally (imperfect random number generation)
or as a consequence of system drift.

Let $\varepsilon$ parameterize a deviation from uniform inputs.
Such bias can enlarge the set of effective classical strategies,
since a classical device may optimize performance toward the
most frequently queried settings. The appropriate classical
reference value must therefore be replaced by a bias-dependent
ceiling $S_{C,\mathrm{eff}}(\varepsilon)$.

Failure to adjust the classical benchmark accordingly can lead
to apparent supra-classical scores that merely reflect the
mismatch between assumed and actual input statistics.

\subsection{Temporal correlations and memory}

Ideal certification analyses often assume independence across
rounds. However, temporal correlations may arise due to memory
effects in the device or correlations in the input generation
process.

When correlations are present, adaptive classical strategies can
exploit information from previous rounds to improve performance.
In such cases, the effective classical ceiling must be computed
under the correlated operational model. Statistical guarantees
must also remain valid under non-IID data, motivating the use
of martingale-based concentration bounds as described in
Sec.~\ref{sec:cert_model}.

\subsection{Selection effects and postselection}

A particularly delicate deviation arises when the set of
evaluation rounds $\mathcal{E}$ is determined by a selection
rule, such as discarding certain rounds (e.g.\ detection failures
or data-cleaning steps).

If the empirical score is computed conditionally on the kept
rounds,
\begin{equation}
    \hat S_{\mathrm{cond}}
    = \frac{1}{|\mathcal{E}|}
      \sum_{t \in \mathcal{E}} X_t,
\end{equation}
then the resulting statistic does not represent the average
success probability over all experimental attempts.
In particular, selective discarding of unfavorable outcomes can
artificially inflate $\hat S_{\mathrm{cond}}$.

This phenomenon is structurally analogous to detection-efficiency
loopholes in other certification contexts: conditioning changes
the meaning of the witness.

By contrast, an unconditional score that normalizes over all
test rounds preserves the operational interpretation of the
success probability and prevents artificial inflation through
postselection.

\subsection{Adaptive classical strategies}

Finally, classical devices may adapt their behavior based on
observed data, especially when input distributions or correlations
are structured. Learning-based strategies provide a systematic
way to explore such adaptive behaviors.

Importantly, the presence of adaptive classical strategies does
not by itself imply a failure of certification. Rather, it
highlights the need to define $S_{C,\mathrm{eff}}$ under the
same operational assumptions that govern the data-generating
process. When scoring rules, statistical bounds, and classical
references are properly aligned, adaptive strategies can at most
reconstruct the effective classical ceiling implied by the
operational model.

\section{Robust certification via alignment}
\label{sec:robust}

The deviations discussed in Sec.~\ref{sec:deviations} reveal a common
structural issue: certification claims depend not only on the observed
data, but on the consistency between scoring rules, statistical bounds,
and classical reference values. We formalize this requirement as an
\emph{alignment principle}.

\subsection{Alignment principle}

\begin{definition}[Operational alignment]
A certification protocol is operationally aligned if:
\begin{enumerate}
    \item The score $\hat S$ estimates the performance quantity
    corresponding to the full set of operational trials.
    \item The confidence bound $S_{\mathrm{low}}$ remains valid under
    the actual data-generating process (including correlations or
    adaptive behavior).
    \item The classical reference value $S_{C,\mathrm{ref}}$ coincides
    with the effective classical ceiling $S_{C,\mathrm{eff}}$ implied
    by the same operational model.
\end{enumerate}
\end{definition}

Violation of any of these conditions can lead to artificial
supra-classical claims.
Equivalently, an operational model induces a triplet
(score definition, statistical bound, classical reference set).
Operational alignment requires that all three components
be derived under the same assumptions on the data-generating process.

\subsection{Unconditional scoring}

A key structural requirement concerns the definition of the score.
Let $X_t \in [0,1]$ denote the  unconditional
success indicator per attempted test round $t$. In the presence
of postselection, this corresponds to replacing $X_t$ by
$\chi_t(H_t) X_t$, so that discarded rounds contribute zero.

We define the unconditional score
\begin{equation}
    \hat S
    = \frac{1}{N_{\mathrm{test}}}
      \sum_{t \in \mathcal{T}} X_t,
\end{equation}
where $\mathcal{T}$ is the set of all designated test rounds and
$N_{\mathrm{test}} = |\mathcal{T}|$.

Under this definition, rounds that are discarded by an internal
selection rule contribute effectively as failures if no valid
success outcome is recorded. This preserves the operational meaning
of $\hat S$ as the success probability per attempted test round.

By contrast, conditional scoring over a subset of ``kept'' rounds
modifies the denominator and changes the interpretation of the
witness, opening the possibility of score inflation through
selection effects.

Note that in the unconditional setting considered throughout the remainder
of the paper, we identify $T$ with the set of all designated test
rounds and write $N_{\mathrm{test}} = N$ for simplicity.

\subsection{Martingale-safe bounds under deviations}

When temporal correlations or adaptive strategies are present,
independence assumptions may fail. However, as long as the increments
$X_t$ remain bounded in $[0,1]$, martingale concentration inequalities
provide valid finite-sample guarantees.

Using such bounds ensures that $S_{\mathrm{low}}$ remains valid
even under memory effects or adaptive classical strategies.
This removes the need to assume IID behavior in order to control
statistical fluctuations.

\subsection{Robustness gap}

We now formalize the central diagnostic quantity.

\begin{definition}[Robustness gap]
Given an operational model,
the robustness gap is defined as
\begin{equation}
    \Delta_{\mathrm{rob}}
    = S_{\mathrm{low}} - S_{C,\mathrm{eff}} .
\end{equation}
\end{definition}

A certification claim is supported at confidence level
$1-\alpha$ only if $\Delta_{\mathrm{rob}} > 0$.
Conversely, $\Delta_{\mathrm{rob}} \le 0$ indicates that the data
are compatible with classical performance under the operational
model.

\subsection{False certification under misalignment}

Artificial certification can arise in two structurally distinct ways:

\begin{enumerate}
    \item \emph{Statistical fluctuation}: $\hat S$ exceeds
    $S_{C,\mathrm{eff}}$ due to finite-sample noise.
    This is controlled by the confidence bound $S_{\mathrm{low}}$.
    \item \emph{Model misalignment}: $\hat S$ is compared to an
    incorrect reference value $S_C \neq S_{C,\mathrm{eff}}$,
    or is computed using a score that does not represent the
    operational success probability.
\end{enumerate}

The first mechanism is intrinsic to finite data and is suppressed
exponentially with increasing $N_{\mathrm{test}}$.
The second is structural and can persist even asymptotically unless
the scoring rule and classical benchmark are properly aligned. The first mechanism is suppressed asymptotically as
$N_{\mathrm{test}} \to \infty$,
whereas the second can persist unless the protocol is
operationally aligned.

The robustness gap cleanly separates these mechanisms by combining
finite-sample control with an operationally consistent reference.

\subsection{Adversarial classical strategies}
\label{sec:adversary}

We now formalize the notion of an adversarial classical strategy
within the operational framework.

Consider a sequence of rounds $t=1,\dots,N$.
In each round, the device receives inputs $I_t$ (e.g.\ preparation
settings or queries) and produces an output $O_t$.
A classical strategy is specified by a conditional probability rule
\begin{equation}
    P(O_t \mid I_t, H_{t-1}),
\end{equation}
where $H_{t-1}$ denotes the full history of prior inputs and outputs
up to round $t-1$.

\paragraph{Memoryless strategies.}
If $P(O_t \mid I_t, H_{t-1}) = P(O_t \mid I_t)$,
the device is memoryless and rounds are independent.

\paragraph{Correlated strategies.}
If the distribution depends on $H_{t-1}$ only through a finite
internal state $S_{t-1}$ evolving as
\begin{equation}
    S_t = f(S_{t-1}, I_t, O_t),
\end{equation}
the device implements a finite-memory strategy.

\paragraph{Fully adaptive strategies.}
More generally, the device may update its internal parameters
based on the entire observed history, potentially implementing
a learning rule. Such strategies remain classical as long as
the outputs are generated by classical stochastic processes.

\medskip

An \emph{adversarial classical strategy} is any strategy within
this class that is optimized to maximize the expected certification
score under the given operational model.

Importantly, this definition is algorithm-independent.
Reinforcement learning or other adaptive algorithms provide
constructive implementations of such strategies, but do not
expand the set of achievable classical correlations.

\paragraph{Implications for certification.}
Under model deviations such as biased inputs or temporal
correlations, adversarial classical strategies may achieve
performance exceeding the ideal classical ceiling $S_C$.
However, their performance remains bounded by the effective
ceiling $S_{C,\mathrm{eff}}$ derived under the same operational
assumptions.

Consequently, adaptive learning does not constitute a new
physical loophole; rather, it serves as a systematic method
to explore the classical strategy space permitted by the
operational model.

 \subsection{Adversarial postselection}
\label{sec:postselection_formal}

In addition to choosing outputs, an operational implementation may
determine whether a given round is included in the evaluation set.
We model this through a selection function
\begin{equation}
    \chi_t(H_t) \in \{0,1\},
\end{equation}
where $H_t$ denotes the full history up to and including round $t$,
and $\chi_t=1$ indicates that round $t$ is kept for evaluation.

The effective evaluation set is therefore
\begin{equation}
    \mathcal{E}
    = \{ t \in \mathcal{T} : \chi_t(H_t) = 1 \},
\end{equation}
where $\mathcal{T}$ denotes the designated test rounds.

\paragraph{Memoryless selection.}
If $\chi_t$ depends only on local variables (e.g.\ detection events),
the selection process is memoryless.

\paragraph{Adaptive selection.}
More generally, $\chi_t$ may depend arbitrarily on the history $H_t$,
including previous successes and failures. In this case the selection
rule itself may be viewed as an adaptive classical strategy.

\medskip

An \emph{adversarial postselection rule} is any history-dependent
function $\chi_t(H_t)$ designed to maximize the reported score
under a given scoring rule.

\paragraph{Impact on scoring.}
If the empirical score is defined conditionally as
\begin{equation}
    \hat S_{\mathrm{cond}}
    = \frac{\sum_{t \in \mathcal{E}} X_t}
           {|\mathcal{E}|},
\end{equation}
then adversarial postselection can increase $\hat S_{\mathrm{cond}}$
by preferentially discarding unfavorable outcomes.
In the extreme case, selective discarding of all failures yields
$\hat S_{\mathrm{cond}} = 1$ independently of the underlying
data-generating process.

By contrast, the unconditional score
\begin{equation}
    \hat S
    = \frac{1}{N_{\mathrm{test}}}
      \sum_{t \in \mathcal{T}}
      \chi_t(H_t) X_t
\end{equation}
retains the interpretation of success probability per attempted
test round, since discarded rounds effectively contribute as
failures. Under this definition, adversarial postselection cannot
inflate the score beyond what is achievable under the operational
model.

\paragraph{Robust certification with postselection.}
When unconditional scoring is combined with martingale-safe
confidence bounds, the lower bound $S_{\mathrm{low}}$ remains
valid even if both the output strategy and the selection rule
are adaptive and history-dependent.

\section{Case study: the $2\!\to\!1$ random access code}
\label{sec:rac}

We now illustrate the framework in the minimal
$2\!\to\!1$ random access code (RAC), which serves as a controlled
testbed for SDI certification.

\subsection{Semi-device-independent certification}

Semi-device-independent (SDI) certification refers to protocols in
which performance guarantees are derived under partial structural
assumptions about the devices, without requiring full characterization
of their internal workings.

In contrast to fully device-independent (DI) scenarios, which rely on
Bell nonlocality and assume only no-signaling constraints, SDI
approaches typically impose a bounded resource assumption. A common
example is a dimension bound: the communicated system between
preparation and measurement devices is restricted to a fixed
dimension (e.g.\ a single classical or quantum bit).

Under such constraints, one can derive classical performance bounds
for specific tasks. Exceeding these bounds certifies the use of
non-classical resources consistent with the assumed dimension limit.

The $2\!\to\!1$ random access code (RAC) provides a minimal example.
If only one classical bit is communicated, the maximal classical
success probability under uniform inputs is $S_C=\tfrac34$.
Exceeding this value certifies the use of non-classical (quantum)
resources within the assumed communication dimension.

The SDI character of the protocol arises from the fact that no
assumptions are made about the internal implementation of the
devices beyond the communication constraint. Certification therefore
depends critically on the validity of the operational assumptions
under which the classical benchmark is derived.

\subsection{Setup}

In each round:

\begin{itemize}
    \item A preparation device receives two uniformly random bits
    $a_0,a_1 \in \{0,1\}$.
    \item A measurement device receives a query bit $y \in \{0,1\}$.
    \item The preparation device sends a single classical bit $m$.
    \item The measurement device outputs $b \in \{0,1\}$.
\end{itemize}

The success event is
\begin{equation}
    X_t = \mathbf{1}\!\left[ b_t = a_{y_t} \right].
\end{equation}

The empirical score is $\hat S = \frac{1}{N} \sum_t X_t$ under
unconditional scoring.

\subsection{Ideal classical ceiling}

Under uniformly random $y$, any classical strategy can send
only one of the two bits. Without loss of generality,
consider a deterministic strategy that always sends $a_0$.
Then
\begin{equation}
    \Pr(b=a_y)
    =
    \Pr(y=0) \cdot 1
    +
    \Pr(y=1) \cdot \tfrac12.
\end{equation}

For uniform inputs $\Pr(y=0)=\Pr(y=1)=\tfrac12$, yielding
\begin{equation}
    S_C = \tfrac34.
\end{equation}

No classical strategy exceeds this value.

\subsection{Biased inputs and effective ceiling}

Suppose now that the query distribution is biased:
\begin{equation}
    \Pr(y=0) = \tfrac12 + \varepsilon.
\end{equation}

The optimal classical strategy is to send the bit corresponding
to the more frequently queried input.
Assuming $\varepsilon \ge 0$, the optimal choice is to send $a_0$,
leading to
\begin{align}
    S_{C,\mathrm{eff}}(\varepsilon)
    &=
    \left(\tfrac12 + \varepsilon\right)\cdot 1
    +
    \left(\tfrac12 - \varepsilon\right)\cdot \tfrac12
    \\
    &=
    \tfrac34 + \tfrac{\varepsilon}{2}.
\end{align}

For $\varepsilon<0$, the symmetric expression yields
\begin{equation}
    S_{C,\mathrm{eff}} = \tfrac34 + \tfrac{|\varepsilon|}{2}.
\end{equation}

Thus input bias enlarges the classical ceiling linearly in
$|\varepsilon|$.

\subsubsection{Bias uncertainty and robust classical ceilings}
\label{subsec:bias_uncertainty}

In practice, the setting bias may not be known exactly due to finite-sample fluctuations, calibration uncertainty, or slow drift. To obtain certification statements that remain valid under such uncertainty, it is natural to assume only that the bias parameter satisfies
\begin{equation}
|\varepsilon|\le \varepsilon_{\max},
\end{equation}
and to define the worst-case robust classical ceiling
\begin{equation}
S_{C,\mathrm{rob}}(\varepsilon_{\max})
=
\sup_{|\varepsilon|\le \varepsilon_{\max}} S_{C,\mathrm{eff}}(\varepsilon).
\label{eq:SCrob}
\end{equation}
For the $2\to 1$ RAC, using
\begin{equation}
S_{C,\mathrm{eff}}(\varepsilon)=\frac34+\frac{|\varepsilon|}{2},
\end{equation}
we obtain
\begin{equation}
S_{C,\mathrm{rob}}(\varepsilon_{\max})
=
\frac34+\frac{\varepsilon_{\max}}{2}.
\label{eq:SCrob_RAC}
\end{equation}
A robust supra-classical claim is therefore accepted only if
\begin{equation}
S_{\mathrm{low}} > S_{C,\mathrm{rob}}(\varepsilon_{\max}),
\end{equation}
or equivalently if the minimax robustness gap
\begin{equation}
\Delta_{\mathrm{rob}}^{\mathrm{minimax}}
=
S_{\mathrm{low}} - S_{C,\mathrm{rob}}(\varepsilon_{\max})
\end{equation}
is strictly positive.

\begin{proposition}[Worst-case robust ceiling under bounded bias uncertainty]
\label{prop:robust_ceiling_bias}
Suppose the operational bias parameter satisfies
\begin{equation}
|\varepsilon|\le \varepsilon_{\max},
\end{equation}
and let $S_{C,\mathrm{eff}}(\varepsilon)$ denote the effective classical
ceiling for fixed bias $\varepsilon$. Then the appropriate worst-case
classical benchmark is
\begin{equation}
S_{C,\mathrm{rob}}(\varepsilon_{\max})
=
\sup_{|\varepsilon|\le \varepsilon_{\max}}
S_{C,\mathrm{eff}}(\varepsilon).
\label{eq:robust_ceiling_general}
\end{equation}
Any certification claim based only on the bound
$|\varepsilon|\le\varepsilon_{\max}$ is valid at confidence level
$1-\alpha$ only if
\begin{equation}
S_{\mathrm{low}} > S_{C,\mathrm{rob}}(\varepsilon_{\max}).
\label{eq:robust_acceptance_general}
\end{equation}
For the biased $2\to1$ RAC, where
\begin{equation}
S_{C,\mathrm{eff}}(\varepsilon)=\frac34+\frac{|\varepsilon|}{2},
\end{equation}
one obtains
\begin{equation}
S_{C,\mathrm{rob}}(\varepsilon_{\max})
=
\frac34+\frac{\varepsilon_{\max}}{2}.
\label{eq:robust_ceiling_rac_prop}
\end{equation}
\end{proposition}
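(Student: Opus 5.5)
The proof splits into three parts: a soundness (minimax) argument showing the supremum is the correct benchmark, a necessity argument showing no smaller benchmark is valid, and a direct evaluation for the RAC.

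\emph{Soundness.} Fix the true bias $\varepsilon^\ast$ with $|\varepsilon^\ast|\le\varepsilon_{\max}$, and suppose the device is classical.
\begin{itemize}
\item By Proposition~\ref{prop:effective_ceiling} applied to the operational law $\pi_{\varepsilon^\ast}(a,y)$, the true expected score satisfies $S_{\mathrm{true}}\le S_{C,\mathrm{eff}}(\varepsilon^\ast)$.
\item This quantity is in turn at most $S_{C,\mathrm{rob}}(\varepsilon_{\max})$, by definition of the supremum.
\item The martingale-safe bound of Sec.~\ref{sec:cert_model} gives $\Pr(S_{\mathrm{true}}\ge S_{\mathrm{low}})\ge 1-\alpha$. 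This holds regardless of adaptivity, and it does not depend on knowing $\varepsilon^\ast$, since the bound uses only boundedness of $X_t$ and the unconditional score.
\end{itemize}
Hence, on an event of probability at least $1-\alpha$, a classical device yields $S_{\mathrm{low}}\le S_{C,\mathrm{rob}}$. So the acceptance rule \eqref{eq:robust_acceptance_general} has false-certification probability at most $\alpha$, uniformly over the uncertainty set.

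\emph{Necessity.} This is the ``only if'' direction, and I expect it to be the main obstacle, because it requires making ``claims based only on $|\varepsilon|\le\varepsilon_{\max}$'' precise. My plan is to argue by contradiction. Suppose a threshold $\tau<S_{C,\mathrm{rob}}$ were used. Pick $\varepsilon'$ in the admissible set with $S_{C,\mathrm{eff}}(\varepsilon')>\tau$; this exists by the definition of the supremum. Then run the optimal deterministic classical strategy from Proposition~\ref{prop:effective_ceiling} at bias $\varepsilon'$. Its i.i.d.\ rounds give $\hat S\to S_{C,\mathrm{eff}}(\varepsilon')$ almost surely, and $S_{\mathrm{low}}-\hat S=O(N^{-1/2})$. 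Therefore $S_{\mathrm{low}}>\tau$ with probability tending to one. This is a persistent false certification that is consistent with the stated bias information, so no smaller threshold is valid.

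\emph{RAC evaluation.} Substitute $S_{C,\mathrm{eff}}(\varepsilon)=\tfrac34+|\varepsilon|/2$, derived in Sec.~\ref{sec:rac}. The function is even and nondecreasing in $|\varepsilon|$, so the supremum over $[-\varepsilon_{\max},\varepsilon_{\max}]$ is attained at $|\varepsilon|=\varepsilon_{\max}$. This gives \eqref{eq:robust_ceiling_rac_prop}. Throughout, one must also check that $\varepsilon_{\max}\le\tfrac12$, so that the query probabilities remain valid.
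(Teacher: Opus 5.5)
Your proposal is correct and follows the same logic as the paper's proof. Validity must hold uniformly over every bias compatible with $|\varepsilon|\le\varepsilon_{\max}$, which forces the supremum as the benchmark, and monotonicity of $\tfrac34+|\varepsilon|/2$ in $|\varepsilon|$ evaluates it for the RAC; your soundness/necessity split, with the explicit persistent-false-certification construction at a compatible $\varepsilon'$, makes rigorous what the paper compresses into one sentence.

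One small refinement to the soundness step: under adaptive strategies, Azuma--Hoeffding controls $\hat S$ against the random average $\frac1N\sum_t\mathbb{E}[X_t\mid\mathcal F_{t-1}]$, not against $\mathbb{E}[\hat S]$. So you should apply Proposition~\ref{prop:effective_ceiling} round by round, conditionally on the history and using that each round's inputs are drawn afresh from $\pi_{\varepsilon^\ast}$, to obtain $\mathbb{E}[X_t\mid\mathcal F_{t-1}]\le S_{C,\mathrm{eff}}(\varepsilon^\ast)$ for every $t$.
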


\begin{proof}
If only the interval $|\varepsilon|\le\varepsilon_{\max}$ is known,
then a certification statement must hold uniformly over all operational
models compatible with that information. Therefore the relevant
classical benchmark is the supremum of the effective ceiling over the
allowed bias interval, yielding Eq.~\eqref{eq:robust_ceiling_general}.
A claim of supra-classical performance is valid only if the lower
confidence bound exceeds this worst-case classical benchmark, which
gives Eq.~\eqref{eq:robust_acceptance_general}.

For the $2\to1$ RAC, $S_{C,\mathrm{eff}}(\varepsilon)$ is monotone in
$|\varepsilon|$, so the supremum over $|\varepsilon|\le\varepsilon_{\max}$
is attained at $|\varepsilon|=\varepsilon_{\max}$, yielding
Eq.~\eqref{eq:robust_ceiling_rac_prop}.
\end{proof}
\paragraph{Estimating $\varepsilon_{\max}$ from data.}
Let $q=\Pr(y=0)$ and let $\hat q=n_0/N$ be the observed frequency of the setting $y=0$ over $N$ trials. A $(1-\beta)$ confidence interval for $q$ can be obtained, for example, from Hoeffding's inequality:
\begin{equation}
\Pr\!\left(|\hat q-q|\le \delta\right)\ge 1-\beta,
\qquad
\delta=\sqrt{\frac{\ln(2/\beta)}{2N}}.
\end{equation}
Since $\varepsilon=2q-1$, this implies
\begin{equation}
|\varepsilon-\hat\varepsilon|\le 2\delta,
\qquad
\hat\varepsilon=2\hat q-1.
\end{equation}
A conservative operational choice is therefore
\begin{equation}
\varepsilon_{\max}=|\hat\varepsilon|+2\delta,
\end{equation}
which yields a fully data-driven robust benchmark through Eq.~\eqref{eq:SCrob_RAC}.

\subsection{Conditional scoring and postselection inflation}

Consider now a selection rule $\chi_t(H_t)$ that discards
a fraction of rounds. If the score is computed conditionally as
\begin{equation}
    \hat S_{\mathrm{cond}}
    =
    \frac{\sum_{t\in\mathcal{E}} X_t}
         {|\mathcal{E}|},
\end{equation}
an adversarial selection rule may discard primarily failure
events.

In the extreme case where all failures are discarded,
\begin{equation}
    \hat S_{\mathrm{cond}} = 1,
\end{equation}
independently of the underlying strategy.

Comparing $\hat S_{\mathrm{cond}}$ to $S_C=\tfrac34$
would then yield an artificial violation.

This illustrates that conditional scoring changes the
operational meaning of the witness and invalidates
the comparison with the ideal classical ceiling.

\subsection{Unconditional scoring restores robustness}

Under unconditional scoring,
\begin{equation}
    \hat S
    =
    \frac{1}{N}
    \sum_{t=1}^N \chi_t(H_t) X_t,
\end{equation}
discarded rounds effectively contribute as failures.

In this case, selective discarding cannot increase the
expected score beyond the effective classical ceiling
$S_{C,\mathrm{eff}}$ implied by the operational model.

\subsection{Learning-based adversarial stress tests}

To probe whether adaptive classical strategies can exploit
operational deviations in nontrivial ways, we implement
learning-based classical agents using standard reinforcement
learning (RL) algorithms \cite{sutton2018reinforcement}.
Their role is purely diagnostic: they provide an automated search
over the classical strategy space under biased, correlated,
or drifting input distributions. They are not proposed as a new
certification method, but as stress tests for the alignment
framework.

We consider simple bandit-based learners and windowed
state-dependent variants that update action-value estimates
from observed rewards.
These algorithms adapt the preparation strategy in response
to empirical input statistics, thereby tracking changes in
the operational model.
Importantly, the receiver applies the fixed classical decoding
rule $b=m$, so that all adaptive behaviour is confined to
the preparation device.

The learned strategies improve empirical performance relative
to static classical policies when input statistics drift or
are biased.
However, this improvement reflects adaptation to the effective
classical optimisation problem defined by the operational model.
When compared against the corresponding effective classical
ceiling $S_{C,\mathrm{eff}}$, the robustness gap $\Delta_{\mathrm{rob}}$ remains
non-positive for all classical agents.

Thus, learning-based adversaries do not enlarge the classical
correlation set.
They serve as automated search procedures that reconstruct
the operationally implied classical bound.
False certification arises from benchmarking under an idealised
reference that does not reflect the operational model,
rather than from adaptive optimisation itself.

\subsection{Numerical results}

We illustrate the preceding analysis with simulations of the
$2\!\to\!1$ RAC under representative operational deviations.
The numerical results are designed to highlight the central logic of the
paper: apparent supra-classical behaviour can arise from benchmark
misalignment, whereas aligned scoring and benchmarking restore a correct
classical interpretation.

\begin{figure}[ht!]
  \centering
  \includegraphics[width=0.86\linewidth]{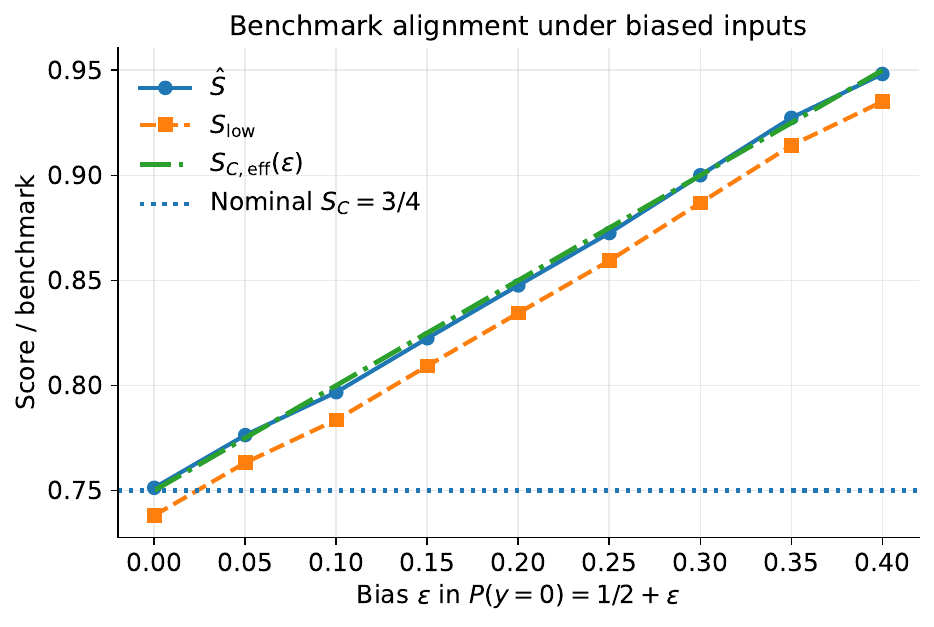}
  \caption{
  Benchmark alignment under biased inputs.
  The empirical score $\hat S$ and its lower confidence bound
  $S_{\mathrm{low}}$ are shown as functions of the input bias
  $\varepsilon$, together with the nominal classical benchmark
  $S_C=3/4$ and the bias-corrected effective ceiling
  $S_{C,\mathrm{eff}}(\varepsilon)=3/4+|\varepsilon|/2$.
  When compared against the nominal benchmark, the data would appear
  to violate the classical limit. Once the operationally correct
  benchmark is used, the same behaviour is seen to be entirely
  classical.
  }
  \label{fig:alignment_bias}
\end{figure}

\paragraph{Input bias and benchmark alignment.}
Figure~\ref{fig:alignment_bias} provides the clearest numerical
illustration of the alignment principle.
As the input bias increases, the observed score $\hat S$ rises together
with the martingale-safe lower bound $S_{\mathrm{low}}$.
If one compares these data to the nominal SDI threshold
$S_C=3/4$, the behaviour would be misread as increasingly
supra-classical.
However, the correct operational comparison is with the effective
classical ceiling
\[
S_{C,\mathrm{eff}}(\varepsilon)=\frac34+\frac{|\varepsilon|}{2},
\]
which tracks the attainable classical performance under the same
biased input law.
The figure therefore makes explicit that false certification can arise
purely from benchmark misalignment, even when the observed data are
fully classical.

\begin{figure}[ht!]
  \centering
  \includegraphics[width=0.86\linewidth]{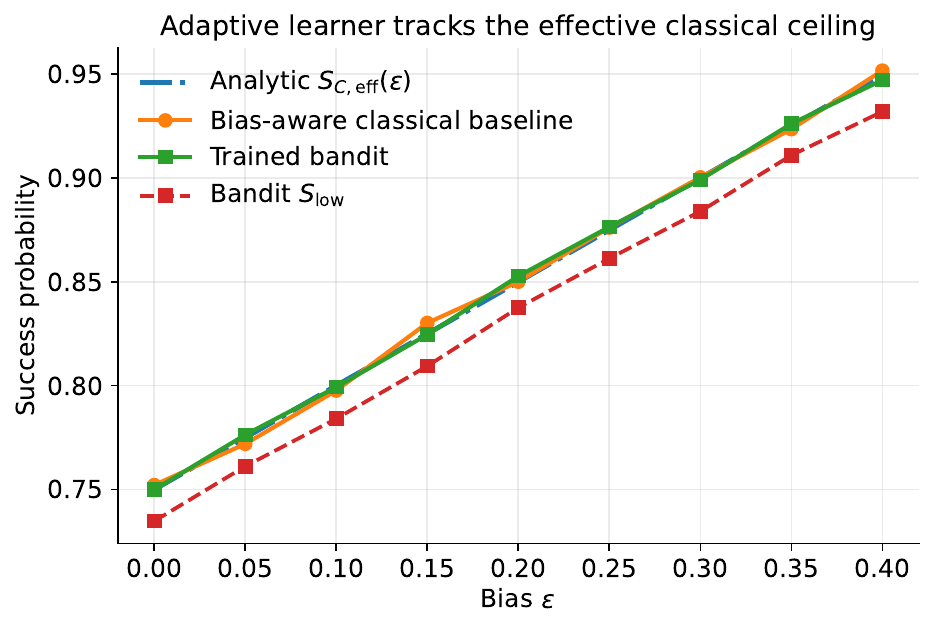}
  \caption{
  Adaptive classical learning recovers the effective classical ceiling.
The analytically optimal benchmark $S_{C,\mathrm{eff}}(\varepsilon)$ is
compared with a bias-aware classical baseline, a trained bandit
strategy, and the lower confidence bound for the learned policy.
The learned agent tracks the same operationally correct classical
benchmark and does not exceed it.
  }
  \label{fig:learner_ceiling}
\end{figure}

\paragraph{Adaptive strategies recover the effective classical ceiling.}
Figure~\ref{fig:learner_ceiling} clarifies the role of learning-based
classical adversaries in the present framework.
The trained bandit does not reveal a new loophole or enlarge the
classical correlation set; rather, it learns to approximate the same
bias-dependent optimum already captured analytically by
$S_{C,\mathrm{eff}}(\varepsilon)$.
Its performance closely follows both the analytic ceiling and the
bias-aware optimal classical baseline, while its lower confidence bound
remains below the ceiling.
This supports the interpretation of adaptive learning as a diagnostic
search procedure over the allowed classical strategy space, not as a
distinct source of supra-classical behaviour.

\begin{figure}[t]
  \centering
  \includegraphics[width=0.86\linewidth]{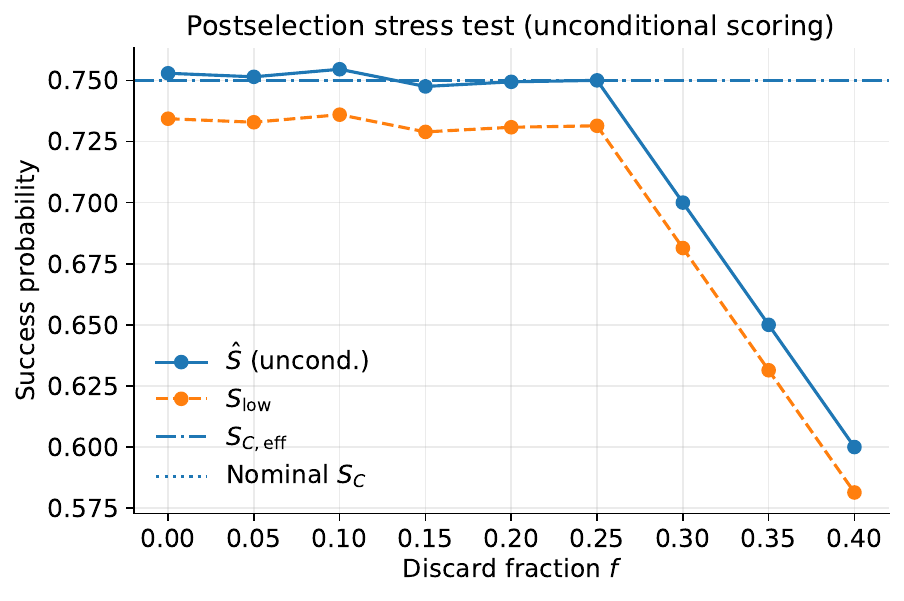}
  \caption{
  Postselection stress test.
  Conditional scoring leads to inflated apparent success probabilities
  and spurious certification, whereas unconditional scoring restores
  compatibility with the effective classical ceiling.
  }
  \label{fig:postselection}
\end{figure}

\paragraph{Postselection stress test.}
Figure~\ref{fig:postselection} isolates the second main mechanism of
false certification considered in this work.
When the score is computed conditionally on kept rounds only,
adversarial postselection can preferentially discard failures and
artificially inflate the reported performance.
This produces apparent violations of the nominal classical threshold
that do not correspond to any genuine non-classical resource.
By contrast, unconditional scoring assigns discarded rounds zero
contribution and preserves the operational interpretation of the score
as success probability per attempted test round.
Under this aligned definition, compatibility with the effective
classical ceiling is restored.

\paragraph{Robustness-gap interpretation.}
Across all these numerical tests, the relevant diagnostic is not the
empirical score alone but the robustness gap
\[
\Delta_{\mathrm{rob}} = S_{\mathrm{low}} - S_{C,\mathrm{eff}}.
\]
Under aligned certification, this quantity remains non-positive for all
classical strategies, including adaptive and history-dependent ones.
Apparent violations arise only when the benchmark or scoring rule is
defined inconsistently with the operational data model.\paragraph{Adaptive strategies and nonstationarity.}
In the unified operational scenario,
including mild nonstationarity of the input statistics,
adaptive classical strategies (e.g.\ bandit-based learners \cite{auer2002finite})
recover at most the effective classical ceiling.
As shown in the right plot in Fig.~\ref{fig:memory_misalignment},
the robustness gap remains negative for all classical policies,
indicating that adaptive learning explores the classical strategy
space without producing supra-classical behaviour.

\subsection{Memory amplifies benchmark misalignment}

In contrast to the aligned setting of Fig.~\ref{fig:learner_ceiling}, we now consider a deliberately misaligned evaluation rule in which conditional scoring and a nominal benchmark are used. In this regime, adaptive memory does not enlarge the classical set, but it can amplify the rate of false certification by exploiting operational drift more efficiently than static strategies.

To isolate this effect,
we consider a moderate operational deviation regime in which
static classical strategies do not systematically trigger
false certification when compared against the nominal bound
$S_C = 0.75$, but adaptive strategies can exploit the same
deviations more effectively.
\begin{figure}[ht!]
  \centering
  \includegraphics[width=0.95\linewidth]{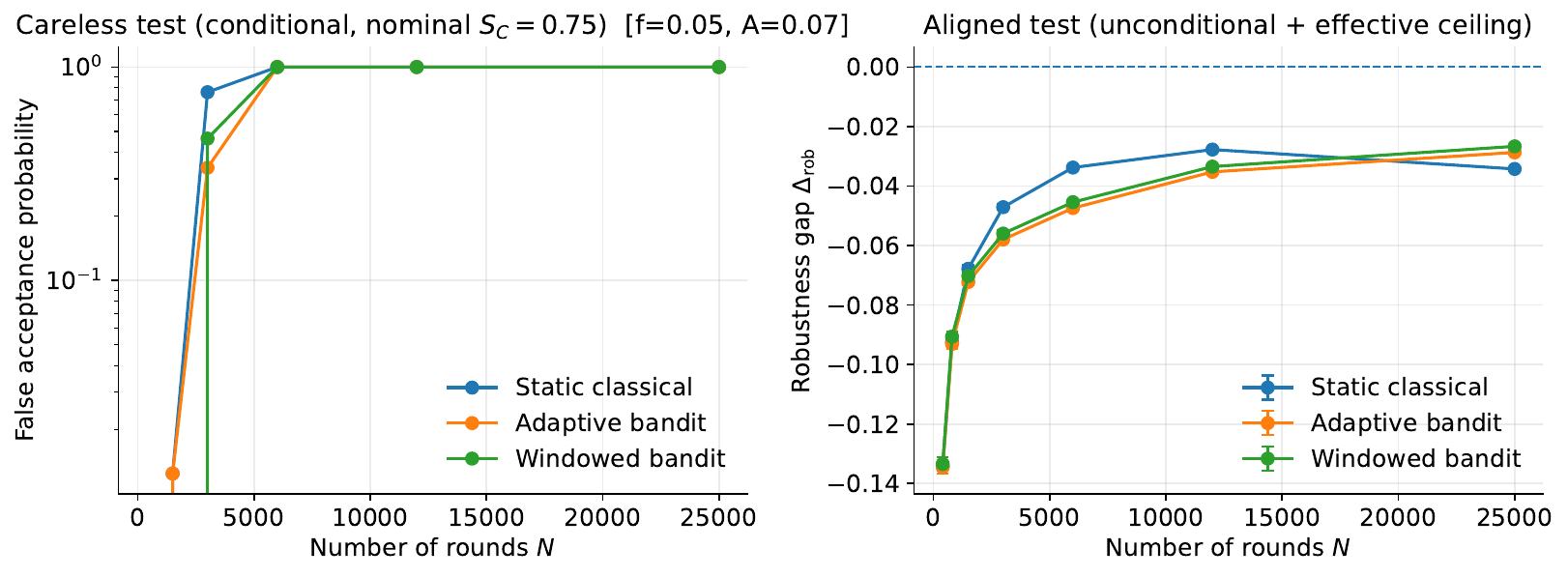}
  \caption{
  Memory amplifies benchmark misalignment.
  Left: false-acceptance probability under careless evaluation
  (conditional scoring and nominal classical bound).
  Adaptive strategies exhibit substantially higher false-positive
  rates than static policies.
  Right: robustness gap under aligned certification
  (unconditional scoring and effective classical ceiling).
  All classical strategies satisfy $\Delta_{\mathrm{rob}}\le 0$.
  }
  \label{fig:memory_misalignment}
\end{figure}

Figure~\ref{fig:memory_misalignment} shows the false-acceptance
probability under a careless evaluation rule
(conditional scoring and nominal benchmark) as a function of
the number of rounds $N$.
In this regime, the static classical strategy yields only
occasional false positives, whereas adaptive bandit-based
strategies produce substantially higher false-acceptance rates.
The improvement reflects the ability of memoryful policies
to track temporal drift and bias in the input distribution,
thereby approaching the operationally implied classical ceiling
more efficiently.

The right panel of Fig.~\ref{fig:memory_misalignment}
shows the corresponding robustness gap under aligned
certification.
When the effective classical ceiling $S_{C,\mathrm{eff}}$
is computed consistently with the operational model and
unconditional scoring is enforced, all classical strategies—
including adaptive ones—satisfy
$\Delta_{\mathrm{rob}} \le 0$ and converge to zero from below
as $N$ increases. The parameter regime is chosen such that the operational
deviation is moderate: static strategies do not deterministically
violate the nominal benchmark, while adaptive strategies
track drift sufficiently well to do so with high probability.

This comparison highlights the structural point:
adaptive classical memory does not enlarge the admissible
correlation set, but it amplifies the consequences of
benchmark misalignment.
False certification arises from inconsistent modelling,
not from adaptive optimisation itself.

\section{Discussion and outlook}
\label{sec:discussion}
The central result of this work is that false certification can be
structural rather than statistical. Operational deviations such as
bias, drift, memory, and postselection do not by themselves invalidate
SDI protocols; rather, they modify the effective classical
optimisation problem against which performance must be evaluated.
When score, concentration bound, and classical benchmark are defined
consistently with the operational data-generating process, classical
strategies --- including adaptive and memoryful ones --- remain below
the appropriate ceiling. By contrast, inconsistent benchmarking can
generate persistent false certification, even in the large-sample
limit.

The $2\to1$ RAC was chosen as a transparent testbed, not as the scope
of the claim. What matters is the alignment principle itself and the
associated robustness gap. We now discuss the broader implications of
that principle, its relevance beyond the specific example considered
here, and directions for further development.

\subsection{Protocol-agnostic lessons}

The central message of this work is independent of the specific
RAC example. Certification relies not only on statistical
concentration, but on the alignment between the operational model
and the classical benchmark against which performance is compared.
When input statistics, efficiency constraints, or memory effects
modify the effective classical optimisation problem, the nominal
bound $S_C$ no longer provides a valid reference.

The alignment principle introduced here formalises this requirement.
By defining an effective classical ceiling $S_{C,\mathrm{eff}}$
consistent with the actual operational assumptions, and by
quantifying deviations through the robustness gap
\[
\Delta_{\mathrm{rob}} = S_{\mathrm{low}} - S_{C,\mathrm{eff}},
\]
one obtains a diagnostic that separates statistical fluctuations
from structural modelling errors. 

The construction is
protocol-agnostic: it applies to any witness-based certification
scenario in which the relevant classical reference can be computed
under the same assumptions as the operational data model. In finite
prepare-and-measure scenarios, Proposition~\ref{prop:effective_ceiling}
makes this statement explicit by showing that the operational classical
benchmark is the solution of a well-defined optimisation over the convex
hull of deterministic classical strategies.
\subsection{Relevance for quantum technologies and cyber-relevant protocols}

In practical quantum-communication and cryptographic deployments,
certification is embedded within classical control and
post-processing layers. Quantum key distribution (QKD), randomness
expansion, and semi-device-independent (SDI) protocols rely on
classical sampling, basis-choice selection, parameter estimation,
error correction, and privacy amplification. These steps help define
the operational model against which security or quantumness claims
are evaluated.

Real-world implementations introduce additional structure:
biased sampling, efficiency filtering, detector dead times,
time-dependent drift, and adaptive control logic. In networked or
software-mediated settings, classical preprocessing may also include
anomaly detection, data curation, or real-time adjustment of protocol
parameters in response to estimated channel conditions. Such
mechanisms alter the distribution of inputs and outputs and therefore
modify the relevant classical benchmark.

If the benchmark is not updated to reflect the actual operational
constraints, classical effects compatible with the deployment model
may be incorrectly interpreted as supra-classical behaviour.
Importantly, this is not a cryptographic vulnerability of the
underlying protocol, but a modelling vulnerability in the
certification layer. Defining an effective classical ceiling
consistent with the true operational assumptions restores alignment
and closes this gap.

\subsection{Asymptotic misalignment}

Statistical uncertainty vanishes in the large-sample limit,
but structural misalignment does not.
If the classical benchmark $S_C$ does not match the operational
model (e.g.\ biased or correlated inputs), then even as
$N \to \infty$ the empirical score $\hat S$ converges to an
incorrect reference, potentially leading to persistent false
certification.

By contrast, when the effective classical ceiling
$S_{C,\mathrm{eff}}$ is defined consistently with the actual
input and device model, the robustness gap
\[
\Delta_{\mathrm{rob}} = S_{\mathrm{low}} - S_{C,\mathrm{eff}}
\]
converges to zero from below for all classical strategies.
Finite-size fluctuations disappear asymptotically, while
misalignment errors remain unless explicitly corrected.
This distinction cleanly separates statistical effects from
structural modelling assumptions in certification.

\subsection{Limitations and next steps}

This work provides a framework and a minimal testbed rather than a comprehensive survey of certification protocols. We do not derive new classical bounds or security proofs; rather, we provide a consistency layer that can be applied to existing certification frameworks.

The $2\to1$ RAC was chosen for transparency and analytic tractability, but the alignment principle is not tied to this specific witness. Future work should extend the analysis to other semi-device-independent and fully device-independent scenarios, including entropic witnesses, higher-dimensional dimension tests, and Bell-type inequalities. An important practical direction is to combine operationally aligned concentration bounds with algorithmic computation of effective classical ceilings over the relevant classical strategy polytope, and with robust minimax benchmarks when operational parameters such as setting bias are only known within confidence intervals. It will also be important to study richer device models incorporating memory, detector inefficiencies, and experimentally measured drift, as well as to apply the robustness-gap diagnostic to real datasets. More broadly, systematic integration of certification logic with adaptive classical control and data-curation pipelines remains an open and practically relevant direction for emerging quantum technologies.

\section*{Acknowledgements}
We would like to thank Manuel Gessner for interesting conversations on the use of RL in quantum device certifications.
The work
of V.S. is supported by the Spanish grants PID2023-148162NB-C21 and CEX2023-001292-S. 
The work of A.S. is supported by the National Natural Science Foundation of China (Grant No.~W2531008) and the Peacock Plan.


\appendix

\section{Concentration bounds used in this work}
\label{app:bounds}

We briefly state the concentration inequality used to define the
lower confidence bound $S_{\mathrm{low}}$.

Let $\{X_t\}_{t=1}^{N}$ be binary random variables
$X_t \in \{0,1\}$ representing the success indicator in each
test round.
Define the empirical mean
\[
\hat S = \frac{1}{N} \sum_{t=1}^{N} X_t .
\]

We assume that the increments are bounded,
\[
0 \le X_t \le 1,
\]
and that the process satisfies the martingale difference condition
with respect to the natural filtration
$\mathcal{F}_t$ generated by past outcomes,
\[
\mathbb{E}[X_t \mid \mathcal{F}_{t-1}] = \mu_t,
\]
where $\mu_t$ is the (possibly history-dependent)
conditional expectation under the classical model. Since $0 \le X_t \le 1$, the martingale differences
$X_t - \mathbb{E}[X_t \mid \mathcal{F}_{t-1}]$
are uniformly bounded, ensuring validity of the bound.

Under these assumptions, Azuma--Hoeffding's inequality yields
\[
\Pr\!\left( \hat S - \mathbb{E}[\hat S] \ge \epsilon \right)
\le \exp\!\left(-2 N \epsilon^2 \right).
\]

For a chosen significance level $\alpha$, we therefore define
the lower confidence bound
\[
S_{\mathrm{low}}
=
\hat S
-
\sqrt{\frac{\ln(1/\alpha)}{2N}}.
\]

All statistical statements in the main text are based on this
finite-sample bound.

The Azuma--Hoeffding bound is used here as a minimal martingale-safe choice requiring only bounded increments. In settings where one can also control or estimate the predictable quadratic variation, tighter finite-sample results may be obtained from Freedman- or Bernstein-type inequalities. Such refinements do not modify the alignment logic developed in the main text; they only sharpen the statistical component of the certification triplet.
\section{Details of the simulated operational deviations}
\label{app:sim_details}

This appendix specifies the operational deviations and adaptive
strategies used in the numerical simulations.

\subsection*{Input models}

\paragraph{Biased IID inputs.}
In the stationary bias model, the query bit $y_t \in \{0,1\}$
is sampled independently with
\[
\Pr(y_t=0) = \tfrac12 + \varepsilon,
\qquad
\Pr(y_t=1) = \tfrac12 - \varepsilon,
\]
with $|\varepsilon| \le 1/2$.

\paragraph{Biased Markov inputs.}
In the correlated-input model, $\{y_t\}$ forms a
first-order stationary Markov chain with transition probabilities
\[
\Pr(y_t = y_{t-1}) = p_{\mathrm{stay}},
\qquad
\Pr(y_t \neq y_{t-1}) = 1 - p_{\mathrm{stay}}.
\]
The stationary distribution is chosen to match a target bias
$\varepsilon$, so that
$\Pr(y_t=0)=\tfrac12+\varepsilon$.

\paragraph{Drift model.}
To model mild nonstationarity, the bias is allowed to vary slowly
in time, e.g.
\[
\varepsilon_t
=
\varepsilon_0
+
A \sin\!\left( \frac{2\pi t}{T} \right),
\]
or via a bounded random walk.
The query $y_t$ is then sampled with
$\Pr(y_t=0)=\tfrac12+\varepsilon_t$ independently
conditioned on $\varepsilon_t$.

\subsection*{Postselection model}

Postselection is implemented via a selection function
$\chi_t \in \{0,1\}$.
Given a target discard fraction $f$, a fraction $fN$
of rounds is removed.
In the adversarial model used in the stress tests,
discarded rounds are chosen preferentially among failures
whenever possible.
Conditional scoring computes the success rate over kept rounds only,
while unconditional scoring assigns zero contribution to discarded
rounds, preserving bounded increments.

\subsection*{Adaptive classical strategies}

\paragraph{Simple bandit.}
We consider the standard $2\!\to\!1$ RAC scenario, where the
preparation device receives input bits $(a_0,a_1)$ and sends a
single classical message $m \in \{0,1\}$ to the measurement
device, which receives a query $y \in \{0,1\}$ and outputs
a bit $b \in \{0,1\}$.

The preparation device selects between two classical encoding
actions:
\begin{itemize}
\item Action $0$: send $a_0$,
\item Action $1$: send $a_1$.
\end{itemize}
A two-armed bandit maintains action-value estimates $Q(a)$.
At round $t$, the action is chosen via an $\varepsilon$-greedy rule:
with probability $\varepsilon_{\mathrm{greedy}}$
a random action is selected, otherwise
$a = \arg\max_{a'} Q(a')$.
After observing reward $r_t \in \{0,1\}$,
the update rule is
\[
Q(a) \leftarrow Q(a) + \eta \bigl(r_t - Q(a)\bigr),
\]
with learning rate $\eta$.

\paragraph{Windowed bandit (regime detection).}
To model adaptive tracking under drift,
a state variable $s_t$ is defined from a sliding window of
the previous $W$ queries,
\[
s_t = 
\begin{cases}
0, & \hat{\varepsilon}_t \ge 0,\\
1, & \hat{\varepsilon}_t < 0,
\end{cases}
\]
where $\hat{\varepsilon}_t$ is the empirical bias estimate
over the window.
Separate action values $Q(s,a)$ are maintained and updated
as above.

Unless otherwise stated, simulations use
learning rate $\eta \approx 0.05$--$0.1$,
exploration probability
$\varepsilon_{\mathrm{greedy}} \approx 0.05$,
and window size $W$ between $100$ and $400$.
Results are robust to moderate variations of these parameters.

In all simulations, the measurement device performs the trivial
classical decoding $b = m$, so that the adaptive behaviour is
entirely confined to the preparation (encoding) strategy.
\bibliographystyle{unsrt}
\bibliography{references}

\end{document}